\newtheorem{theorem}{Theorem}
\newtheorem{lemma}{Lemma}
\newtheorem{corollary}{Corollary}
\newtheorem{remark}{Remark}  % added by yxw 20160929
\def\ScaleIfNeeded{%
\ifdim\Gin@nat@width>\linewidth \linewidth \else \Gin@nat@width
\fi } \makeatother
\begin{document}

\title{\Huge{Performance Analysis of NOMA with Fixed Gain Relaying over Nakagami-$m$ Fading Channels}}

\author{ Xinwei~Yue,~\IEEEmembership{Student Member,~IEEE,} Yuanwei\ Liu,~\IEEEmembership{Member,~IEEE,} Shaoli~Kang, Arumugam~Nallanathan,~\IEEEmembership{Fellow,~IEEE}

\thanks{X. Yue is with School of Electronic and Information Engineering, Beihang university, Beijing,
China (email: xinwei$\_$yue@buaa.edu.cn).}
\thanks{Y. Liu and A. Nallanathan are with King's College London, London,
U.K. (email: \{yuanwei.liu, arumugam.nallanathan\}@kcl.ac.uk).}
\thanks{S. Kang is with State Key Laboratory of Wireless Mobile Communications, China Academy of Telecommunications Technology (CATT), Beijing, China and also with School of Electronic and Information Engineering, Beihang University, Beijing, China (email: kangshaoli@catt.cn).}
 }

%\author{
%\IEEEauthorblockN{  Xinwei~Yue\IEEEauthorrefmark{1}, Yuanwei~Liu\IEEEauthorrefmark{2}, Shaoli Kang\IEEEauthorrefmark{3},  %Arumugam~Nallanathan,\IEEEauthorrefmark{2},}
%\IEEEauthorblockA{
%\IEEEauthorrefmark{1} Beihang University, Beijing, China\\
%\IEEEauthorrefmark{2} King's College London, London, UK\\
%\IEEEauthorrefmark{3} China Academy of Telecommunication Technology, Beijing, China\\
%\IEEEauthorrefmark{4}  Beijing University of Posts and Telecommunications, Beijing, China\\ %School of Information and Communication Engineering,
%\IEEEauthorrefmark{4}  Henan Polytechnic University, China.\\%school of Physics and Electronic Information Engineering,
% } }
%\IEEEauthorrefmark{3} King's College London, London, UK

%\thanks{X. Yue is with School of Electronic and Information Engineering, Beihang university, Beijing
%China (email: xinwei$\_$yue@buaa.edu.cn).}
%\thanks{Y. Liu and A. Nallanathan are with King's College London, London,
%UK (email: \{yuanwei.liu, arumugam.nallanathan\}@kcl.ac.uk).}
%\thanks{S. Kang is with China Academy of Telecommunication Technology, Beijing, China (email: kangshaoli@catt.cn).}
%\thanks{ Z. Ding is with School of Computing and Communications, Lancaster University, LA1 4WA,
% UK (e-mail: z.ding@lancaster.ac.uk).}
%}

%Arumugam~Nallanathan

\maketitle

\begin{abstract}
This paper studies the application of cooperative techniques for non-orthogonal multiple access (NOMA). More particularly, the fixed gain amplify-and-forward (AF) relaying with NOMA is investigated over Nakagami-$m$ fading channels. Two scenarios are considered insightfully. 1) The first scenario is that the base station (BS) intends to communicate with multiple users through the assistance of AF relaying, where the direct links are existent between the BS and users; and 2) The second scenario is that the AF relaying is inexistent between the BS and users. To characterize the performance of the considered scenarios, new closed-form expressions for both exact and asymptomatic outage probabilities are derived.
Based on the analytical results, the diversity orders achieved by the users are obtained. For the first and second scenarios, the diversity order for the $n$-th user are $\mu(n+1)$ and $\mu n$, respectively. Simulation results unveil that NOMA is capable of outperforming orthogonal multiple access (OMA) in terms of outage probability and system throughput. It is also worth noting that NOMA can provide better fairness compared to conventional OMA.
By comparing the two scenarios, cooperative NOMA scenario can provide better outage performance relative to the second scenario.%\textcolor[rgb]{0.00,0.00,1.00}{}
\end{abstract}
\begin{keywords}
Non-orthogonal multiple access, amplify-and-forward relaying, Nakagami-$m$ fading channels, diversity order
\end{keywords}
%keywords: Non-orthogonal multiple access (NOMA); Amplify-and-Forward relaying; Nakagami-$m$ fading channels; Outage probability; Diversity order;
\section{Introduction}

Non-orthogonal multiple access (NOMA) has received considerable attention as the promising technique for future wireless networks due to its superior spectral efficiency and massive connectivity \cite{Li20145G,6666209}. The pivotal feature of NOMA is that signals from the plurality of users can share and multiplex the same radio resources with different power factors based on their channel conditions. At the receiving end, the user with poor channel conditions regards other user's messages as interference when it decodes its own message. However, the user with better channel conditions is capable of getting rid of another users' messages by applying successive interference cancellation (SIC) before decoding its own information \cite{Cover1991Elements}.

Some initial research contributions in the field of NOMA have been made by researchers \cite{Ding2015Application,6868214,7361990,7273963,Liu7398134}. More specifically, in \cite{Ding2015Application}, the authors summarized the emerging technologies from NOMA combination with multiple-input multiple-output (MIMO) to cooperative NOMA and cognitive radio (CR) NOMA, etc. In the cellular down link scenario, the outage behavior of NOMA with randomly deployed users was investigated using bounded path loss model in \cite{6868214}. In \cite{7361990}, the authors derived the outage probability under two different kinds of channel state information (CSI). The influence of user pairing with the fixed power allocation for NOMA system over Rayleigh fading channels was analyzed in \cite{7273963}. Furthermore, the performance of NOMA in large-scale underlay CR was evaluated in terms of outage probability by using stochastic-geometry \cite{Liu7398134}. To evaluate the performance of uplink NOMA, the outage probability of more efficient NOMA schemes with power control has been derived in \cite{7390209}. The authors investigated the mutil-cell uplink NOMA transmission scenarios using Poisson cluster process \cite{Tabassum2016uplink}, in which the rate coverage probability for the NOMA user was derived on the conditions of the different SIC schemes.

Wireless relaying technology, which has given rise to the extensive attention, is an effective way to combat the deleterious effects of fading. The outage performance of the amplify-and-forward (AF) and decode-and-forward (DF) relaying schemes were investigated in \cite{laneman2004cooperative}. Recently, several contributions in term of NOMA with relaying have been researched \cite{Ding2014Cooperative,6708131,7445146,7230246,Choi7561047}, which can improve the spectrum efficiency and transmit reliability of wireless network. Cooperative NOMA scheme was first proposed in which users with better channel conditions are delegated as relaying nodes \cite{Ding2014Cooperative}. As such, the communication reliability for the users far away from the base station are enhanced. The coordinated two-point system with superposition coding (SC) was investigated in the down link communication \cite{6708131}. In order to improve energy efficiency, the authors have considered the simultaneous wireless information and power transfer (SWIPT) to NOMA \cite{7445146}, in which stochastic geometry has been utilized to model the positions of users and the near user is regarded as a energy harvesting DF relay to help far user.
The outage probability and achievable average rate of NOMA with DF relaying were analyzed over Rayleigh fading channels \cite{7230246}. The author in \cite{Choi7561047} proposed relay-aided multiple access (RASA), in which
the near user exploiting the two way relaying protocol to help far user. Additionally, for solving the potential time slot wasted brought by half-duplex relaying protocol, the outage performance of full-duplex device-to-device based cooperative NOMA was researched in \cite{Ding2016FD}.
\subsection{Motivation and Contributions}
While the aforementioned literature laid a solid foundation for the role of NOMA in Rayleigh fading, the impact of cooperative NOMA in Nakagami-$m$ fading has not been well understood.
%The Nakagami-$m$ distribution can be reduce to a variety of fading distributions by setting different value of the parameter $m$. Such as, the Gaussian distribution $(m=\frac{1}{2})$ and the Reyleigh distribution $(m=1)$ are special cases of its.
Based on the different parameter settings, Nakagami-$m$ fading channel can be reduce to multiple types of channel. For instance, the Gaussian channel $(\mu=\frac{1}{2})$ and Rayleigh fading channel $(\mu=1)$ are the special cases of its. In \cite{Ding7740931Nakagami}, the authors investigated the spectrum-sharing CR network based cooperative NOMA over Nakagami-$m$ fading channels. The outage performance of NOMA with channel sorted referring the variable gain AF relaying have been researched in \cite{Men7454773}, but the impact of NOMA in terms of the direct link transmission in Nakagami-$m$ fading was not considered. Therefore, the prior work in \cite{Men7454773} motivates us to develop this research contribution.

%\textcolor[rgb]{0.00,0.00,1.00}{\footnote{\textcolor[rgb]{0.00,0.00,1.00}{It is assumed that these two scenarios are evaluated in terms of outage probability respectively, but our future
%work will compare the performance of these scenarios.}}}

To the best of our knowledge, the performance of NOMA with sorted channel referring to the BS with fixed gain AF relaying over Nakagami-$m$ fading channels is not researched yet. Additionally, as stated in \cite{6868214}, the author did not investigate the outage performance of downlink NOMA system over Nakagami-$m$ fading channels. Motivated by these, we address two NOMA transmission scenarios in this paper: 1) The first scenario is that the BS intends to communicate with multiple users through the assistance of AF relaying, where the direct links are existent between the BS and users; and 2) AF relaying is not existent between the BS and users. The primary contributions of this paper are summarised as follows:
\begin{enumerate}
  \item  We first derive the closed-form expressions of outage probability for the sorted NOMA users. To obtain more insights, we further derive the asymptotic outage probability of the users and obtain the corresponding diversity orders. We demonstrate that NOMA is capable of outperforming OMA in terms of outage probability over Nakagami-$m$ fading channels. We observe that when several users' quality of service (QoS) are met at the same time, NOMA can offer better fairness.
  \item  Additionally, we analyze the delay-limited transmission throughput for both scenarios based on the analytical results. It is worth noting that NOMA can achieve larger throughput with regard to conventional MA in more general channels. %We observe that when several users' QoS are met at same time, NOMA scheme can offer better fairness.
\end{enumerate}
%\begin{enumerate}
%  \item \emph{The first scenario:} We first derive the closed-form expressions of outage probability for the sorted NOMA user. To obtain more insights, we further derive the asymptotic outage probability of the users and obtain the corresponding diversity orders. We demonstrate that NOMA is capable of outperforming OMA in terms of outage probability over Nakagami fading channels. In addition, we analyze the delay-limited transmission throughput based on the derived outage probability.
 % \item \emph{The second scenario:} Similar to the first scenario, we derive the closed-form expressions for both exact and asymptomatic outage probabilities as well as delay-limited throughput. Moreover, the diversity order achieved for the sorted user is obtained. We observe that when several users' QoS are met at same time, NOMA scheme can offer better fairness with regard to conventional MA in more general channels.
%\end{enumerate}
%We first derive the closed-form expressions of outage probability for the near user and far user in two scenarios. Based on the analytical results, the diversity orders for the users are obtained. We demonstrate that NOMA is capable of outperforming OMA in terms of direct both scenarios. Additionally, we observe that when several users' QoS are met at same time, NOMA scheme can offer better fairness with regard to conventional MA in more general channels.
 \subsection{Organization}
The rest of this paper is organized as follows. Section \uppercase\expandafter{\romannumeral2} describes the system model for studying NOMA with the fixed gain AF relaying over Nakagami-$m$ fading channels. In Section \uppercase\expandafter{\romannumeral3}, the exact and asymptomatic expressions of outage probability for the users are derived in two scenarios. Numerical results are presented in Section \uppercase\expandafter{\romannumeral4} for verifying our analysis, and are followed by our conclusion in Section \uppercase\expandafter{\romannumeral5}.

\section{System Model}\label{System Model}
%This paper considers a downlink single cell cooperative communication system with a fixed gain AF relaying.
%\textcolor[rgb]{0.00,0.00,1.00}{}
This paper considers two insightful scenarios which are the downlink single cell cooperative communication scenario with a fixed gain AF relaying and non-cooperative communication scenario, respectively. For the sake of simplicity, the BS, a AF relaying node and two paired users which include near user ${d_n}$ and far user $d_{f}$
are presented as shown in Fig. 1, where the relaying node can be existent or inexistent. All the nodes are equipped with single antenna. The complex channel gain between the BS and users, between the BS and AF relaying node, and between the AF relaying node and users are denoted as ${h_{sd}}$, ${h_{sr}}$ and ${h_{rd}}$, respectively. Without loss of generality, the channel gains of $M$ users are sorted as ${\left| {{h_{s{d_1}}}} \right|^2} \le {\left| {{h_{s{d_2}}}} \right|^2} \le  \cdots  \le {\left| {{h_{s{d_M}}}} \right|^2}$\footnote{In this paper, we only focus our attention on investigating a sorted pair of users in which user 1 and user 2 can be selected or user 1 and user 3
are selected for performing NOMA jointly in the first scenario.}. All the complex channel gains are modeled as independent and identically distribution (i.i.d) random variables RVs $x$ which is subject to Nakagami-$m$ distribution \cite{1576942}.
\begin{figure}[t!]             %   According to the principle of NOMA transmission
\centering
\includegraphics[width=3 in]{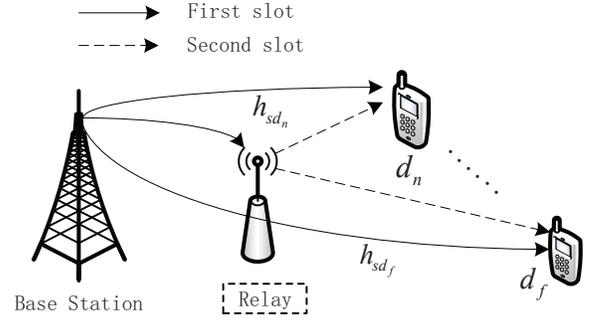}
 \caption{System model}
\label{Fig. 1}
\end{figure}
The transmission powers for the BS and the AF relaying node are assumed to be equal, i.e., $\left( {{P_s} = {P_r} = P} \right)$. The energy of the transmitted signal is normalized to one. Meanwhile, the additive white Gaussian noise (AWGN) terms of all the links have zero mean and variance ${N_0}$.
\subsection{The First Scenario}
For the first scenario, the whole communication processes are completed in two slots.
%In the first slot, the BS broadcasts the linear superposition of the signals of multiple users. The signals are received by the AF relaying node, ${d_n}$ and ${d_f}$ at the same time. In the second slot, the AF relaying node amplifies and forwards its received signals to ${d_n}$ and ${d_f}$ based on the AF protocol \cite{laneman2004cooperative}.
During the first slot, the BS transmits superposed signal $\sqrt {{a _n}{P_s}} {x_n} + \sqrt {{a _f}{P_s}} {x_f}$ to the relaying node, ${d_n}$ and ${d_f}$ according to the NOMA scheme \cite{6868214}. ${a_n}$ and ${a_f}$ are the power allocation coefficients for ${d_n}$ and ${d_f}$, where ${a _n} + {a _f} = 1$, ${a _f} > {a _n}$.  ${x_n}$ and ${x_f}$ are the signal for ${d_{n}}$ and ${d_{f}}$, respectively. By stipulating this assumption, SIC can be invoked by ${d_n}$ for first detecting ${d_f}$ having a larger transmit power, which has less inference signal. Accordingly, the signal of ${d_f}$ is detected from original superposed signal. The observation at the relaying node, ${d_{n}}$ and ${d_{f}}$ are given by
\begin{align}\label{express1 }
{y_r} = {h_{sr}}\left( {\sqrt {{a _n}{P_s}} {x_n} + \sqrt {{a _f}{P_s}} {x_f}} \right) + {n_{sr}},
\end{align}
\begin{align}\label{express2 }
{y_{{d_n}}} = {h_{s{d_n}}}\left( {\sqrt {{a _n}{P_s}} {x_n} + \sqrt {{a _f}{P_s}} {x_f}} \right) + {n_{s{d_n}}},
\end{align}
\begin{align}\label{express3 }
{y_{{d_f}}} = {h_{s{d_f}}}\left( {\sqrt {{a _n}{P_s}} {x_n} + \sqrt {{a _f}{P_s}} {x_f}} \right) + {n_{s{d_f}}},
\end{align}
where ${n_{sr}}$, ${n_{s{d_n}}}$ and ${n_{s{d_f}}}$ are AWGN at the relaying node, ${d_n}$ and ${d_f}$, respectively. The received signal to interference and noise ratio (SINR) for ${d_f}$ to detect ${x_n}$ is given by
\begin{align}\label{far SINR}
{\gamma _{s{d_f}}} = \frac{{{{\left| {{h_{s{d_f}}}} \right|}^2}{a _f}\rho }}{{{{\left| {{h_{s{d_f}}}} \right|}^2}{a _n}\rho  + 1}},
\end{align}
where ${\rho}=\frac{P_{s}}{N_{0}}$ is transmit signal to noise ratio (SNR). SIC is first performed for ${d_n}$ by detecting and decoding the $d_{f}$' information. Then, the received SINR at ${d_n}$ is given by
\begin{align}\label{near decod far SINR}
{\gamma _{s{d_{f \to n}}}} = \frac{{{{\left| {{h_{s{d_n}}}} \right|}^2}{a _f}\rho }}{{{{\left| {{h_{s{d_n}}}} \right|}^2}{a _n}\rho  + 1}}.
\end{align}
After the far user’ message is decoded, ${d_n}$ can decode its own information with the following SINR
\begin{align}\label{near SINR}
{\gamma _{s{d_n}}} = {\left| {{h_{s{d_n}}}} \right|^2}{a _n}\rho.
\end{align}

During the second slot, the relaying node amplifies the received signal and forwards to ${d_n}$ and ${d_f}$ using the fixed gain factor $\kappa  = \sqrt {\frac{{{P_r}}}{{{P_s}\mathbb{E}\left( {{{\left| {{h_{sr}}} \right|}^2}} \right) + {N_0}}}}$, where $\mathbb{E}\{\cdot\}$ denotes expectation operation. The signals received at ${d_n}$ and ${d_f}$ is expressed as
\begin{align}\label{recevied signal at dn}
 {y_{r{d_n}}} =& \kappa {h_{r{d_n}}}{h_{sr}}\left( {\sqrt {{a _n}{P_s}} {x_n} + \sqrt {{a _f}{P_s}} {x_f}} \right)\nonumber\\
  &+ \kappa {h_{r{d_n}}}{n_{sr}} + {n_{r{d_n}}}
\end{align}
and
\begin{align}\label{recevied signal at df }
 {y_{r{d_f}}} =& \kappa {h_{r{d_f}}}{h_{sr}}\left( {\sqrt {{a _n}{P_s}} {x_n} + \sqrt {{a _f}{P_s}} {x_f}} \right) \nonumber\\
  &+ \kappa {h_{r{d_f}}}{n_{sr}} + {n_{r{d_f}}}
\end{align}
respectively, where ${n_{r{d_n}}}$ and ${n_{r{d_n}}}$ denote the AWGN at ${d_n}$ and ${d_f}$, respectively. The received SINR for ${d_f}$ to detect $x_{f}$ is given by
\begin{align}\label{SINR at df second}
{\gamma _{r{d_f}}} = \frac{{{{\left| {{h_{sr}}} \right|}^2}{{\left| {{h_{r{d_f}}}} \right|}^2}{a _f}\rho }}{{{{\left| {{h_{sr}}} \right|}^2}{{\left| {{h_{r{d_f}}}} \right|}^2}{a _n}\rho  + {{\left| {{h_{r{d_f}}}} \right|}^2} + C}},
\end{align}
where $C \buildrel \Delta \over = 1/{{\rm{\kappa }}^2}$. ${d_n}$ first detect $d_{f}$'s information with the received SINR given by
\begin{align}\label{near user detect far user SINR at second}
{\gamma _{r{d_{f \to n}}}} = \frac{{{{\left| {{h_{sr}}} \right|}^2}{{\left| {{h_{r{d_n}}}} \right|}^2}{a _f}\rho }}{{{{\left| {{h_{sr}}} \right|}^2}{{\left| {{h_{r{d_n}}}} \right|}^2}{a _n}\rho  + {{\left| {{h_{r{d_n}}}} \right|}^2} + C}},
\end{align}
and then after SIC operations, the receiving SINR for ${d_n}$ is given by
\begin{align}\label{SINR at dn second}
{\gamma _{r{d_n}}} = \frac{{{{\left| {{h_{sr}}} \right|}^2}{{\left| {{h_{r{d_n}}}} \right|}^2}{a _n}\rho }}{{{{\left| {{h_{r{d_n}}}} \right|}^2} + C}}.
\end{align}
\subsection{The Second Scenario}
 On the basis of the above scenario, another scenario considered in this paper is that the AF relaying node is assumed to be absent with randomly user deployment.

 For the second scenario, the BS transmits the superposed signals to all the users based on the NOMA scheme. Therefore, the signal received at the $m$-th user is written as
\begin{align}\label{SINR at dn second }
{y_m} = {h_m}\mathop \sum \limits_{j = 1}^M \sqrt {{a _j}{P_s}} {x_j} + {n_m},
\end{align}
where ${h_m}$ denotes the Nakagami-$m$ fading channel gain from the BS to the $m$-th user. ${a_j}$ is the power allocation coefficient for the $j$-th user with $\mathop \sum \limits_{j = 1}^M {a _j} = 1$, while it satisfies the relationship for
${a_1} \ge {a_2} \ge  \cdots  \ge {a_{\rm{M}}}$. $x_j$ denotes the signal for the $j$-th user and ${n_m}$ is AWGN at the $m$-th user. Thus, SIC is employed at the $m$-th user and the receiving SINR for the $m$-th user to detect the $i$-th user $\left( {1 \le i \le m \le M} \right)$ is given by
\begin{align}\label{near detect far SINR at second secor  }
{\gamma _{i \to m}} = \frac{{{{\left| {{h_m}} \right|}^2}{a _i}\rho }}{{\rho {{\left| {{h_m}} \right|}^2}\sum\limits_{j = i + 1}^M {{a _j}}  + 1}}.
\end{align}

After $M-1$ users can be detected successfully, the received SINR for the $M$-th user is given by
\begin{align}\label{near SINR at second secor  }
{\gamma _M} = {\left| {{h_M}} \right|^2}{a _M}\rho.
\end{align}

\section{Performance evaluation}\label{Performance evaluation}
In this section, the performance of two scenarios are characterized in terms of outage probability as follows.
\subsection{Outage Probability}
 It is significant to examine the outage probability when the user’ QoS requirements can be satisfied in the communication system just as in [6]. The outage probability of the users over Nakagami-$m$ fading channels is analyzed for two different scenarios.

 From the above explanations, the probability density function (PDF) for $x=|h|$ is expressed as
\begin{align}\label{h for pdf}
f\left( x \right) = \frac{{2{\mu ^\mu }}}{{{\rm{\Gamma }}\left( \mu  \right)\omega _0^\mu }}{x^{2\mu  - 1}}{e^{ - \frac{{\mu {x^2}}}{{{\omega _0}}}}},x > 0
\end{align}
where ${\rm{\Gamma }}\left(  \cdot  \right)$ is the Gamma function, $\mu$ and ${\omega _0}$ denote the parameters of the multipath fading and the control spread, respectively. Therefore, $\lambda  = {x^2}$ is subject to the Gamma distribution. The PDF and cumulative distribution function (CDF) of $\lambda $ is expressed as \cite{Simon2005Digital}
\begin{align}
f\left( \lambda  \right) = \frac{{{\mu ^\mu }{\lambda ^{\mu  - 1}}}}{{\omega _0^\mu {\rm{\Gamma }}\left( \mu  \right)}}{e^{ - \frac{{\mu \lambda }}{{{\omega _0}}}}},\lambda  \ge 0
\end{align}
and
\begin{align} \label{CDF for unorder}
F\left( \lambda  \right) = 1 - {e^{ - \frac{{\mu \lambda }}{{{\omega _0}}}}}\mathop \sum \limits_{k = 0}^{\mu  - 1} \frac{1}{{k!}}{\left( {\frac{{\mu \lambda }}{{{\omega _0}}}} \right)^k}, \lambda  \ge 0
\end{align}
respectively, where ${\omega _0} = {\rm{\mathbb{E}}}\left[ {{\lambda ^2}} \right]$ is the average power.%${\rm{\mathbb{E}}}\left[  \cdot  \right]$ denotes the statistical expectation of a RV,

With the aid of order statistics \cite{David2003Order} and binomial theorem, the PDF and CDF of the $m$th user’s channel gain ${\left| {{h_m}} \right|^2}$ can be expressed as
\begin{align}
 {f_{{{\left| {{h_m}} \right|}^2}}}\left( x \right) = &\frac{{M!}}{{\left( {m - 1} \right)!\left( {M - m} \right)!}}{f_{{{\left| h \right|}^2}}}\left( x \right)\nonumber\\
  &\times {\left( {{F_{{{\left| h \right|}^2}}}\left( x \right)} \right)^{m - 1}}{\left( {1 - {F_{{{\left| h \right|}^2}}}\left( x \right)} \right)^{M - m}},
\end{align}
and
\begin{align}\label{CDF for order}
 {F_{{{\left| {{h_m}} \right|}^2}}}\left( x \right) = \frac{{M!}}{{\left( {m - 1} \right)!\left( {M - m} \right)!}}\mathop \sum \limits_{i = 0}^{M - m} {M-m \choose
  i   } \nonumber\\
  \times \frac{{{{\left( { - 1} \right)}^i}}}{{m + i}}{\left( {{F_{{{\left| h \right|}^2}}}\left( x \right)} \right)^{m + i}},\nonumber\\
\end{align}
respectively, where ${\left| h \right|^2}$ is the unsorted channel gain between the BS and an arbitrary user.
%$\choose $
\subsubsection{Outage Probability for the First Scenario}\label{Outage Probability with the firstScenario}
In this scenario, the users combine with the observations from the BS and the relaying node by using selection combining at the last slot. Therefore, an outage event for ${d_f}$ can be interpreted as two reasons, i.e., it cannot detect its own message at both slots. Based on the above explanation, the outage probability of ${d_f}$ is given by
\begin{align} \label{expression OP for df}
{P_{{d_f}}} =& {{\rm{P}}_{\rm{r}}}\left( {{\gamma _{s{d_f}}} < {\gamma _{t{h_f}}}} \right){{\rm{P}}_{\rm{r}}}\left( {{\gamma _{r{d_f}}} < {\gamma _{t{h_f}}}} \right),
\end{align}
where ${\gamma _{t{h_f}}} = {2^{2{R_f}}} - 1$ with ${R_f}$ being the target rate at ${d_f}$.

The following theorem provides the outage probability of $d_{f}$ in the this scenario.
\begin{theorem} \label{theorem:1}
The closed-form expression for the outage probability of the investigated $d_{f}$ is expressed as
\begin{align} \label{the last expression OP for df}
 {P_{{d_f}}} =& \mathop \sum \limits_{i = 0}^{M - f} {
 M-f \choose
  i}\frac{{{\varphi _f}}}{{f + i}}\mathop \sum \limits_{q = 0}^{f + i} {
 f+i \choose
  q}{\left( { - 1} \right)^{q + i}}{\chi _f} \nonumber\\
  &\times \mathop \sum \limits_{{p_0} +  \cdots  + {p_{\mu  - 1}} = q} {q \choose
  {{p_0}, \cdots ,{p_{\mu  - 1}}}  } \mathop \prod \limits_{k = 0}^{\mu  - 1} {\left( {\frac{{\psi _f^k}}{{k!}}} \right)^{{p_k}}}  \nonumber\\
  &\times \left\{ {1 - \frac{{2{\mu ^\mu }{e^{ - \frac{{\mu \varepsilon }}{{{\omega _{sr}}}}}}}}{{\omega _{sr}^\mu {\rm{\Gamma }}\left( \mu  \right)}}\mathop \sum \limits_{k = 0}^{\mu  - 1} \frac{{{{\left( {\varepsilon C} \right)}^k}}}{{k!}}{{\left( {\frac{\mu }{{{\omega _{r{d_f}}}}}} \right)}^k}\mathop \sum \limits_{i = 0}^{\mu  - 1} {\mu-1 \choose
  i }} \right.  \nonumber\\
& \left. { \times {\varepsilon ^{\mu  - i - 1}}{{\left( {\frac{{\varepsilon C{\omega _{sr}}}}{{{\omega _{r{d_f}}}}}} \right)}^{\frac{{i - k + 1}}{2}}}{K_{i - k + 1}}\left( {2\mu \sqrt {\frac{{\varepsilon C}}{{{\omega _{sr}}{\omega _{r{d_f}}}}}} } \right)} \right\},
\end{align}
where $\varepsilon  \buildrel \Delta \over = \frac{{{\gamma _{t{h_f}}}}}{{\rho \left( {{a _f} - {a _n}{\gamma _{t{h_f}}}} \right)}}$ with ${a _f} > {a _n}{\gamma _{t{h_f}}}$. $M$ denotes the number of users in the considered scenario, ${\varphi _f} = \frac{{M!}}{{\left( {f - 1} \right)!\left( {M - f} \right)!}}$, ${\chi _f} = {e^{ - q{\psi _f}}}$, ${\psi _f} = \frac{{\mu \varepsilon }}{{{\omega _{s{d_f}}}}}$, ${q \choose
  {{p_0}, \cdots ,{p_{\mu  - 1}}}  } = \frac{{q!}}{{{p_0}!{p_1}! \cdots {p_{\mu  - 1}}!}}$. $f$ denotes the $f$-th user (far~user). ${\omega _{sr}}$ and ${\omega _{r{d_f}}}$ denote the average power for the links between the BS and the relaying node, and between the relaying node and $d_f$, respectively. ${K_v}\left(  \cdot  \right)$ is the modified Bessel function of the second kind with order $v$.
\end{theorem}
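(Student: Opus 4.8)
The plan is to exploit the product structure in \eqref{expression OP for df}: the two factors describe the first-slot direct reception at $d_f$ and the second-slot relayed reception, they involve disjoint channel gains, and each can be put in closed form on its own. First I would reduce each SINR outage event to a threshold on the underlying squared channel gains, then carry out the averaging.

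For the direct-link factor $\mathrm{P_r}\!\left(\gamma_{sd_f}<\gamma_{t{h_f}}\right)$, I would clear the denominator in \eqref{far SINR}; under the stated condition $a_f>a_n\gamma_{t{h_f}}$ the event collapses to the linear event $|h_{sd_f}|^2<\varepsilon$, so this factor is exactly the sorted CDF of the $f$-th user evaluated at $\varepsilon$. Substituting the order-statistics CDF \eqref{CDF for order} with $m=f$ and the unsorted Gamma CDF \eqref{CDF for unorder}, the remaining work is purely algebraic. I would expand the $(f+i)$-th power of the Gamma CDF by the binomial theorem, which creates the inner sum $\sum_{q}\binom{f+i}{q}(-1)^{q}$ and the factor $\chi_f=e^{-q\psi_f}$, and then expand the $q$-th power of the finite sum $\sum_{k=0}^{\mu-1}\psi_f^{k}/k!$ by the multinomial theorem, which creates $\binom{q}{p_0,\dots,p_{\mu-1}}\prod_{k}(\psi_f^{k}/k!)^{p_k}$. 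This reproduces the entire prefactor standing in front of the braces in \eqref{the last expression OP for df}.

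The harder factor is the relayed term $\mathrm{P_r}\!\left(\gamma_{rd_f}<\gamma_{t{h_f}}\right)$, i.e. the braced expression. Writing $X=|h_{sr}|^2$ and $Y=|h_{rd_f}|^2$ and clearing the denominator in \eqref{SINR at df second}, the outage event becomes $X<\varepsilon+\varepsilon C/Y$. I would condition on $Y$ and evaluate $\int_0^\infty F_X\!\left(\varepsilon+\varepsilon C/y\right)f_Y(y)\,dy$, where $F_X$ is the unsorted Gamma CDF of average power $\omega_{sr}$ and $f_Y$ the Gamma PDF of average power $\omega_{r{d_f}}$. The constant term of $F_X$ integrates to unity, giving the leading $1$ inside the braces, while the constant $e^{-\mu\varepsilon/\omega_{sr}}$ and the Gamma-PDF normalisation factor out front. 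Expanding the finite sum of $F_X$ and then the polynomial $(\varepsilon+\varepsilon C/y)^k$ by the binomial theorem leaves integrals of monomials in $y$ against the product of the relay-link tail $e^{-\mu\varepsilon C/(\omega_{sr}y)}$ and $y^{\mu-1}e^{-\mu y/\omega_{r{d_f}}}$. Each of these has the table form
\begin{equation}
\int_0^\infty y^{\nu-1}e^{-\beta/y-\gamma y}\,dy = 2\left(\frac{\beta}{\gamma}\right)^{\nu/2}K_\nu\!\left(2\sqrt{\beta\gamma}\right),
\end{equation}
with $\beta=\mu\varepsilon C/\omega_{sr}$ and $\gamma=\mu/\omega_{r{d_f}}$, so that $2\sqrt{\beta\gamma}=2\mu\sqrt{\varepsilon C/(\omega_{sr}\omega_{r{d_f}})}$ matches the Bessel argument, and the half-integer powers of $\beta/\gamma$ together with the orders $\nu$ assemble into the $K_{i-k+1}$ terms and their weights inside the braces.

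The main obstacle is precisely this last step: the mixed exponential $e^{-\beta/y-\gamma y}$ appears only after the relay-link tail is multiplied against the Gamma density of $Y$, and one must track the nested binomial expansion of $(\varepsilon+\varepsilon C/y)^k$ so that the powers of $y$ line up with the correct Bessel order across the double sum $\sum_{k}\sum_{i}$; some care with the index relabelling (and with the reciprocal $\omega_{sr}/\omega_{r{d_f}}$ weights) is needed to bring the result to the exact form displayed. Once both factors are in closed form, multiplying them termwise yields \eqref{the last expression OP for df}; finally I would note that $a_f>a_n\gamma_{t{h_f}}$ ensures $\varepsilon>0$, so every threshold invoked above is well defined.
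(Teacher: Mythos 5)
Your overall plan and your treatment of the direct-link factor coincide exactly with the paper's Appendix~A: the factorization of \eqref{expression OP for df} into $\Theta_1\Theta_2$, the reduction of $\Theta_1$ to the sorted CDF at $\varepsilon$ under $a_f>a_n\gamma_{th_f}$, and the binomial-plus-multinomial expansion of $\bigl(F_{|h|^2}(\varepsilon)\bigr)^{f+i}$ that yields $\chi_f$ and $\binom{q}{p_0,\dots,p_{\mu-1}}\prod_k(\psi_f^k/k!)^{p_k}$ are all as in the paper, and you invoke the same table integral, Eq.~(3.471.9) of Gradshteyn--Ryzhik. The gap is in your conditioning for $\Theta_2$, and it is not mere bookkeeping. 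Conditioning on $Y=|h_{rd_f}|^2$ and expanding $F_X(\varepsilon+\varepsilon C/y)$, your inner polynomial is $\varepsilon^k(1+C/y)^k=\varepsilon^k\sum_{j=0}^{k}\binom{k}{j}(C/y)^j$, so integrating against $f_Y$ produces Bessel terms of order $\nu=\mu-j\in\{1,\dots,\mu\}$, weights $\bigl(\varepsilon C\,\omega_{rd_f}/\omega_{sr}\bigr)^{(\mu-j)/2}$, and the normalisation $\mu^{\mu}/\bigl(\omega_{rd_f}^{\mu}\Gamma(\mu)\bigr)$ coming from $f_Y$. The theorem's braced expression instead has orders $i-k+1$ ranging down to $2-\mu$, the \emph{reciprocal} ratio $\bigl(\varepsilon C\,\omega_{sr}/\omega_{rd_f}\bigr)^{(i-k+1)/2}$, the prefactor $\omega_{sr}^{-\mu}$, and crucially the factor $\binom{\mu-1}{i}\varepsilon^{\mu-i-1}$, which never arises in your expansion. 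No index relabelling maps your single sum over $j\le k$ (at most $\mu(\mu+1)/2$ terms) onto the displayed double sum over $i,k$ ($\mu^{2}$ terms); the two expressions agree only as numbers, because both evaluate the same probability.

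What generates the displayed structure is the opposite conditioning, which is what the paper does: write the event as $\{X<\varepsilon\}\cup\{X>\varepsilon,\;Y<\varepsilon C/(X-\varepsilon)\}$ with $X=|h_{sr}|^2$, so that $\Theta_2=1-\int_\varepsilon^\infty f_X(y)\bigl(1-F_Y(\varepsilon C/(y-\varepsilon))\bigr)\,dy$; the substitution $x=y-\varepsilon$ makes $f_X$ contribute $(x+\varepsilon)^{\mu-1}e^{-\mu(x+\varepsilon)/\omega_{sr}}$, whose binomial expansion is precisely the source of $\binom{\mu-1}{i}\varepsilon^{\mu-1-i}x^{i}$, while the Gamma tail of $Y$ supplies $x^{-k}$ and $e^{-\mu\varepsilon C/(x\omega_{rd_f})}$; Eq.~(3.471.9) then delivers $K_{i-k+1}$ with the $\bigl(\varepsilon C\,\omega_{sr}/\omega_{rd_f}\bigr)^{(i-k+1)/2}$ weights exactly. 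So your argument, carried out honestly, proves a correct closed form that is \emph{equivalent} to the theorem, but it does not prove the formula as stated: to land on \eqref{the last expression OP for df} you must either flip the conditioning as above, or separately establish the nontrivial identity between the two finite Bessel sums --- a step your sketch papers over with ``index relabelling.'' (For $\mu=1$ the two forms collapse to the same single term, which may be why the discrepancy is easy to miss.)
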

\begin{proof} See Appendix A.
\end{proof}

According to NOMA scheme, the outage would not occur for $d_n$ in two situations where $d_n$ can detect $d_f$'s information and also can detect its own information during the two slots. Furthermore, the outage probability of $d_n$ is given by
\begin{align} \label{expression OP for dn}
 {P_{{d_n}}} =& \left[ {1 - {{\rm{P}}_{\rm{r}}}\left( {{\gamma _{s{d_{f \to n}}}} > {\gamma _{t{h_f}}},{\gamma _{s{d_n}}} > {\gamma _{t{h_n}}}} \right)} \right] \nonumber \\
 & \times \left[ {1 - {{\rm{P}}_{\rm{r}}}\left( {{\gamma _{r{d_{f \to n}}}} > {\gamma _{t{h_f}}},{\gamma _{r{d_n}}} > {\gamma _{t{h_n}}}} \right)} \right],
\end{align}
where ${\gamma _{t{h_n}}} = {2^{2{R_n}}} - 1$ with ${R_n}$ being the target rate at ${d_n}$.

The following theorem provides the outage probability of $d_{n}$ in this scenario.
\begin{theorem} \label{theorem:2}
The closed-form expression for the outage probability of the investigated $d_{n}$ is expressed as
%\newpage
\begin{align} \label{the last expression OP for dn}
 {P_{{d_n}}} =& \mathop \sum \limits_{i = 0}^{M - n} {m-n \choose
  i  }\frac{{{\varphi _n}}}{{n + i}}\mathop \sum \limits_{q = 0}^{n + i} {n+i \choose
  q  }{\left( { - 1} \right)^{q + i}}{\chi _n} \nonumber\\
 & \times \mathop \sum \limits_{{p_0} +  \cdots  + {p_{\mu  - 1}} = q}{q \choose
  {{p_0}, \cdots ,{p_{\mu  - 1}}}   }\mathop \prod \limits_{k = 0}^{\mu  - 1} {\left( {\frac{{\psi _n^k}}{{k!}}} \right)^{{p_k}}}\nonumber\\
&  \times \left\{ {1 - \frac{{2{\mu ^\mu }{e^{ - \frac{{\mu {\rm{\Omega }}}}{{{\omega _{sr}}}}}}}}{{\omega _{sr}^\mu {\rm{\Gamma }}\left( \mu  \right)}}\mathop \sum \limits_{k = 0}^{\mu  - 1} \frac{{{{\left( {{\rm{\Omega }}C} \right)}^k}}}{{k!}}{{\left( {\frac{\mu }{{{\omega _{r{d_n}}}}}} \right)}^k}\mathop \sum \limits_{i = 0}^{\mu  - 1} {\mu-1 \choose
  i   }} \right.\nonumber \\
 &\left. { \times {\varepsilon ^{\mu  - i - 1}}{{\left( {\frac{{{\rm{\Omega }}C{\omega _{sr}}}}{{{\omega _{r{d_n}}}}}} \right)}^{\frac{{i - k + 1}}{2}}}{K_{i - k + 1}}\left( {2\mu \sqrt {\frac{{{\rm{\Omega }}C}}{{{\omega _{sr}}{\omega _{r{d_n}}}}}} } \right)} \right\},
\end{align}
where $\beta  \buildrel \Delta \over = \frac{{{\gamma _{t{h_n}}}}}{{{a _n}\rho }}$, ${\rm{\Omega }} \buildrel \Delta \over = \max \left( {\varepsilon ,\beta } \right)$, ${\varphi _n} = \frac{{M!}}{{\left( {n - 1} \right)!\left( {M - n} \right)!}}$, ${\chi _n} = {e^{ - q{\psi _n}}}$, ${\psi _n} = \frac{{\mu {\rm{\Omega }}}}{{{\omega _{s{d_n}}}}}$, ${\omega _{r{d_n}}}$ denotes the average power of the link between the relaying and $d_n$. $n$ denotes the $n$-th user (near~user).
\end{theorem}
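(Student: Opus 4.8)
The plan is to exploit the product form of the outage probability in \eqref{expression OP for dn}, whose first bracket depends only on the direct link $\mathrm{BS}\to d_n$ through $h_{sd_n}$ and whose second bracket depends only on the relayed link through $h_{sr}$ and $h_{rd_n}$. Since these channel gains are independent, I would evaluate the two brackets separately and multiply at the end.

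First I would treat the direct-link bracket $1-{\rm P}_{\rm r}(\gamma_{sd_{f\to n}}>\gamma_{th_f},\,\gamma_{sd_n}>\gamma_{th_n})$. Substituting \eqref{near decod far SINR} and \eqref{near SINR}, the condition $\gamma_{sd_{f\to n}}>\gamma_{th_f}$ reduces (using $a_f>a_n\gamma_{th_f}$) to $|h_{sd_n}|^2>\varepsilon$, while $\gamma_{sd_n}>\gamma_{th_n}$ reduces to $|h_{sd_n}|^2>\beta$; hence both hold iff $|h_{sd_n}|^2>\Omega$ with $\Omega=\max(\varepsilon,\beta)$, and the bracket collapses to the ordered CDF $F_{|h_n|^2}(\Omega)$. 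I would then insert the order-statistics CDF \eqref{CDF for order} with $m=n$ together with the unordered Nakagami CDF \eqref{CDF for unorder} evaluated at $\Omega$, and expand the $(n+i)$-th power: a binomial expansion yields the sum over $q$ and the factor $\chi_n=e^{-q\psi_n}$, while the multinomial theorem applied to $\big(\sum_{k=0}^{\mu-1}\psi_n^k/k!\big)^{q}$ yields the sum over the compositions $p_0+\cdots+p_{\mu-1}=q$. This reproduces the first three lines of \eqref{the last expression OP for dn}.

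The relayed-link bracket is the harder part. Writing $U=|h_{sr}|^2$ and $V=|h_{rd_n}|^2$ and substituting \eqref{near user detect far user SINR at second} and \eqref{SINR at dn second}, the two SIC inequalities again combine, through $\Omega=\max(\varepsilon,\beta)$, into the single condition $UV/(V+C)>\Omega$, equivalently $V>\Omega C/(U-\Omega)$ with $U>\Omega$. I would then compute ${\rm P}_{\rm r}(UV/(V+C)>\Omega)$ by integrating the complementary CDF of $V$ against the PDF of $U$ over $u\in(\Omega,\infty)$, both being Gamma with shape $\mu$ and average powers $\omega_{rd_n}$ and $\omega_{sr}$ respectively. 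The shift $t=u-\Omega$ factors out $e^{-\mu\Omega/\omega_{sr}}$, and a binomial expansion of $(t+\Omega)^{\mu-1}$ leaves inner integrals of the form $\int_0^\infty t^{i-k}e^{-a/t-bt}\,dt$ with $a=\mu\Omega C/\omega_{rd_n}$ and $b=\mu/\omega_{sr}$.

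The hard part will be this last integral, which I would evaluate in closed form by the standard identity $\int_0^\infty x^{s-1}e^{-a/x-bx}\,dx=2(a/b)^{s/2}K_s(2\sqrt{ab})$; it produces the modified Bessel function $K_{i-k+1}$, the prefactor $(\Omega C\omega_{sr}/\omega_{rd_n})^{(i-k+1)/2}$ and the argument $2\mu\sqrt{\Omega C/(\omega_{sr}\omega_{rd_n})}$, which together with the sums over $k$ and $i$, the factor $\Omega^{\mu-i-1}$ (printed as $\varepsilon^{\mu-i-1}$ in the statement), and the coefficient $\binom{\mu-1}{i}$ give the braced term of \eqref{the last expression OP for dn}. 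Multiplying the two brackets then yields the claim. I note that once the effective threshold is identified as $\Omega$, the computation is structurally identical to the proof of Theorem~\ref{theorem:1} in Appendix~A with $\varepsilon\mapsto\Omega$ and $\omega_{sd_f},\omega_{rd_f}\mapsto\omega_{sd_n},\omega_{rd_n}$; the only genuinely new ingredient is the reduction of the coupled SIC conditions to $|h_{sd_n}|^2>\Omega$ on the direct link and $UV/(V+C)>\Omega$ on the relayed link.
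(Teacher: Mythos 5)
Your proposal is correct and follows essentially the same route as the paper's Appendix~B: the product decomposition of \eqref{expression OP for dn}, the reduction of both SIC conditions (on each hop) to the single threshold $\Omega=\max(\varepsilon,\beta)$, the binomial/multinomial expansion of the ordered CDF for the direct link, and the evaluation of $\int_0^\infty x^{i-k}e^{-a/x-bx}\,dx$ via the standard identity (Gradshteyn--Ryzhik 3.471.9) yielding the $K_{i-k+1}$ term for the relayed link. You also correctly note that the factor $\varepsilon^{\mu-i-1}$ printed in the theorem statement should be $\Omega^{\mu-i-1}$, consistent with the paper's own derivation of $\Theta_4$.
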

\begin{proof} See Appendix B.
\end{proof}
\subsubsection{Outage Probability for the Second Scenario}\label{Outage Probability with the second Scenario}
In this scenario, the SIC is carried out at the $m$-th user by detecting and canceling the $i$-th user’s information $\left( {i \le m} \right)$ before it detects and decodes its own signals in terms of NOMA protocol. If the $m$-th user cannot detect the discretionary $i$-th user’s information, outage occurs. Therefore, after some manipulations such as in [6], the outage probability of $m$-th user can be expressed as follows:
\begin{align} \label{expression for second scenario}
{P_m} = {{\rm{P}}_{\rm{r}}}\left( {{{\left| {{h_m}} \right|}^2} < \varphi _m^*} \right),
\end{align}
where $\varphi _m^* = \max \left\{ {{\varphi _1},{\varphi _2}, \cdots ,{\varphi _m}} \right\}$, ${\varphi _i} = \frac{{{\gamma _{t{h_i}}}}}{{\rho ({a _i} - {\gamma _{t{h_i}}}\mathop \sum \limits_{j = i + 1}^M {a _j})}}$ for $i < M$, ${\varphi _M} = \frac{{{\gamma _{t{h_M}}}}}{{\rho {a _M}}}$, ${\gamma _{t{h_i}}} = {2^{{R_i}}} - 1$ with ${R_i}$ being the target rate at $i$th user. Note that \eqref{expression for second scenario} is obtained under the condition of ${a_i} > {\gamma _{t{h_i}}}\sum\nolimits_{j = i + 1}^M {{a_j}} $.

Substituting \eqref{CDF for unorder} into \eqref{CDF for order}, the outage probability of the $m$-th user over Nakagami-$m$ fading channels can be given by
\begin{align} \label{op derived for second}
 {P_m} =& \frac{{M!}}{{\left( {m - 1} \right)!\left( {M - m} \right)!}}\mathop \sum \limits_{i = 0}^{M - m} {
 M-m \choose
  i}\frac{{{{\left( { - 1} \right)}^i}}}{{m + i}} \nonumber\\
  &\times \mathop \sum \limits_{q = 0}^{m + i} {
 m+i \choose
  q}{\left( { - 1} \right)^q}{e^{ - \frac{{\mu \varphi _m^*q}}{{{\omega _{\rm{m}}}}}}} \nonumber\\
  &\times \mathop \sum \limits_{{p_0} +  \cdots  + {p_{\mu  - 1}} = q} {q \choose
  {{p_0}, \cdots ,{p_{\mu  - 1}}} }\mathop \prod \limits_{k = 0}^{\mu  - 1} {\left( {\frac{{\psi _m^k}}{{k!}}} \right)^{{p_k}}}, \nonumber\\
\end{align}
where ${\psi _m} = \frac{{\mu \varphi _m^*}}{{{\omega _{{m}}}}}$. ${\omega _{{m}}}$ is the average power of the link between the BS and the $m$-th user.

\subsection{Diversity Analysis}\label{Diversity Analysis}
In this section, to gain more insights, the diversity order achieved by the users for two scenarios can be obtained based on the above analytical results. The diversity order is defined as
\begin{align}\label{diversity order}
d =  - \mathop {\lim }\limits_{\rho  \to \infty } \frac{{\log \left( {P \left( \rho  \right)} \right)}}{{\log \rho }}.
\end{align}

When $\varepsilon\to 0$, the approximate expressions of CDF for the unsorted channel gain ${\left| h \right|^2}$ and the $f$-th
user's sorted channel gain ${\left| {{h_f}} \right|^2}$ are given by \cite{Men7454773}
\begin{align}\label{unsorted channel gain CDF}
{F_{{{\left| h \right|}^2}}}\left( \varepsilon  \right) \approx {\left( {\frac{{\mu \varepsilon }}{{{\omega _{sr}}}}} \right)^\mu }\left( {\frac{1}{{\mu !}}} \right),
\end{align}
and
\begin{align}\label{sorted channel gain CDF}
{F_{{{\left| {{h_f}} \right|}^2}}}\left( \varepsilon  \right) \approx \frac{{M!}}{{\left( {M - f} \right)!f!}}{\left( {\frac{{\mu \varepsilon }}{{{\omega _0}}}} \right)^{\mu f}}{\left( {\frac{1}{{\mu !}}} \right)^f},
\end{align}
respectively.

Define the two probabilities at the right hand side of \eqref{expression OP for df} by ${{\rm{\Theta }}_1}$ and ${{\rm{\Theta }}_2}$ respectively. Based on \eqref{sorted channel gain CDF}, a high SNR approximation $(\varepsilon  \to 0)$ of ${{\rm{\Theta }}_1}$ is given by
\begin{align}\label{appro1 op}
{{\rm{\Theta }}_1} \approx \frac{{M!}}{{\left( {M - f} \right)!f!}}{\left( {\frac{{\mu \varepsilon }}{{{\omega _{s{d_f}}}}}} \right)^{\mu f}}{\left( {\frac{1}{{\mu !}}} \right)^f} \propto \frac{1}{{{\rho ^{\mu f}}}},
\end{align}
where $ \propto $ represents ``be proportional to".

${\Theta _2}$ can be rewritten as follows:
\begin{align}
{\Theta _2}{\rm{ = }}{{\rm{P}}_{\rm{r}}}\left( {{{\left| {{h_{sr}}} \right|}^2} < \varepsilon } \right) + \int_\varepsilon ^\infty  {{f_{{{\left| {{h_{sr}}} \right|}^2}}}\left( y \right)} {F_{{{\left| {{h_{r{d_f}}}} \right|}^2}}}\left( {\frac{{\varepsilon C}}{{y - \varepsilon }}} \right)dy.
\end{align}

With the aid of \eqref{unsorted channel gain CDF} and \eqref{sorted channel gain CDF}, the approximation expression of ${\Theta _2}$ at high SNR is given by
\begin{align}\label{appro2 op}
{\Theta _2} \approx {\left( {\frac{{\mu \varepsilon }}{{{\omega _{sr}}}}} \right)^\mu }\left( {\frac{1}{{\mu !}}} \right) + {\left( {\frac{{\mu \varepsilon C}}{{{\omega _{r{d_f}}}}}} \right)^\mu }\frac{{{\mu ^\mu }\delta }}{{\Gamma \left( \mu  \right)\omega _{sr}^\mu \mu !}} \propto \frac{1}{{{\rho ^\mu }}},
\end{align}
where $\delta {\rm{ = }}\int_0^\infty  {{x^{ - 1}}{e^{ - \frac{{\mu x}}{{{\omega _{sr}}}}}}} dx$.

Substitute \eqref{appro1 op} and \eqref{appro2 op} into \eqref{expression OP for df}, the asymptotic outage probability for $d_{f}$ can be expression as
\begin{align}\label{asymptotic exoression for the f-th user}
 P_{{d_f}}^\infty  =& \frac{{M!}}{{\left( {M - f} \right)!f!}}{\left( {\frac{{\mu \varepsilon }}{{{\omega _{s{d_f}}}}}} \right)^{\mu f}}{\left( {\frac{1}{{\mu !}}} \right)^f} \nonumber \\
 & \times \left[ {{{\left( {\frac{{\mu \varepsilon }}{{{\omega _{sr}}}}} \right)}^\mu }\left( {\frac{1}{{\mu !}}} \right) + {{\left( {\frac{{\mu \varepsilon C}}{{{\omega _{r{d_f}}}}}} \right)}^\mu }\frac{{{\mu ^\mu }\delta }}{{\Gamma \left( \mu  \right)\omega _{sr}^\mu \mu !}}} \right].
\end{align}
\begin{remark}\label{remark1}
Upon substituting \eqref{asymptotic exoression for the f-th user} into \eqref{diversity order}, the diversity order achieved for $d_{f}$ is $\mu(f+1) $ in the first scenario.
\end{remark}
%it is clear that the diversity order achieved for $d_{f}$ is $\mu(f+1) $, according to cooperative NOMA transmission.
Similar to \eqref{asymptotic exoression for the f-th user}, the asymptotic outage probability for $d_{n}$ can be expression as
\begin{align}\label{asymptotic exoression for the n-th user}
 P_{{d_n}}^\infty  =& \frac{{M!}}{{\left( {M - {n}} \right)!{n}!}}{\left( {\frac{{\mu \Omega }}{{{\omega _{s{d_n}}}}}} \right)^{\mu {n}}}{\left( {\frac{1}{{\mu !}}} \right)^{n}} \nonumber \\
  &\times \left[ {{{\left( {\frac{{\mu \Omega }}{{{\omega _{sr}}}}} \right)}^\mu }\left( {\frac{1}{{\mu !}}} \right) + {{\left( {\frac{{\mu \Omega C}}{{{\omega _{r{d_n}}}}}} \right)}^\mu }\frac{{{\mu ^\mu }\delta }}{{\Gamma \left( \mu  \right)\omega _{sr}^\mu \mu !}}} \right].
\end{align}
\begin{remark}\label{remark2}
Upon substituting \eqref{asymptotic exoression for the n-th user} into \eqref{diversity order}, the diversity order achieved for $d_{n}$ is $\mu(n+1)$ in the first scenario.
\end{remark}
\textbf{Remark \ref{remark1}} and \textbf{Remark \ref{remark2}} provide insightful guidelines for exploiting the direct link between the BS and users over more general fading channels. The diversity order of the user is relevant to the parameter $\mu$.

In the second scenario, Substituting \eqref{sorted channel gain CDF} into \eqref{expression for second scenario}, the asymptotic outage probability for the $m$-th can be expression as
\begin{align}\label{asymptotic exoression of the m-th user for no relay}
P_m^\infty  = \frac{{M!}}{{\left( {M - m} \right)!m!}}{\left( {\frac{{\mu \varphi _m^*}}{{{\omega _{s{d_m}}}}}} \right)^{\mu m}}{\left( {\frac{1}{{\mu !}}} \right)^m} \propto \frac{1}{{{\rho ^{\mu m}}}}.
\end{align}
\begin{remark}
Similar to the first scenario, Upon substituting \eqref{asymptotic exoression of the m-th user for no relay} into \eqref{diversity order}, the diversity order achieved by the $m$-th user is $\mu m$ in the second scenario.
\end{remark}
\subsection{Throughput Analysis}\label{Throughput Analysis}
In this section, the delay-limited transmission mode is considered for two scenarios over Nakagami-$m$ fading channels. The BS sends information at a constant rate and the system throughput is subjective to the effect of outage probability. It is important to investigate the system throughput in the delay-limited mode for practical implementations. Therefore, the system throughput in the first scenario is expressed as
\begin{align}\label{Throughput Analysis for the first scenario}
{R_{fir}} = \left( {1 - {P_{{d_f}}}} \right){R_f} + \left( {1 - {P_{{d_n}}}} \right){R_n},
\end{align}
where $P_{d_{f}}$ and $P_{d_{n}}$ can be obtained from \eqref{expression OP for df} and \eqref{the last expression OP for dn}, respectively.

Additionally, based on the analytical results for the outage probability in the second scenario, the system throughput with the constant rates is expressed as
\begin{align}\label{Throughput Analysis for the second scenario}
{R_{sec }} = \mathop \sum \limits_{i = 1}^M \left( {1 - {P_i}} \right){R_i},
\end{align}
where ${P_i}$ can be obtained from \eqref{expression for second scenario}.

\section{Numerical Results}
In this section, the numerical results are provided to verify the validity of the derived theoretical expressions for two scenarios over Nakagami-$m$ fading channels. Without loss of the generality, the conventional orthogonal multiple access (OMA) is intended as the benchmark for comparison, where the better user is scheduled. The target rate ${R_0}$ for the orthogonal user is equal to $\sum\nolimits_{i = 1}^M {{R_i}}$ bit per channel user (BPCU).
\subsection{The first scenario}
%\textcolor[rgb]{0.00,0.00,1.00}{}
In the first scenario, the distance between the BS and users is normalised to unity. Let ${d_{sr}}$ denotes the distance between the BS and fixed gain relaying. The average power ${\omega _{sr}} = \frac{1}{{d_{sr}^\alpha }}$ and ${\omega _{rd}} = \frac{1}{{{{\left( {1 - {d_{sr}}} \right)}^\alpha }}}$ can be attained, where $\alpha $ is pathloss exponent setting to be $\alpha=2$.
The power allocation coefficients are ${a _f} = 0.8$, ${a _n} = 0.2$ for $M = 5$. The target rate for the near user $d_{n}$ and far user $d_{f}$ are assumed to be ${R_n}=1.5$ and ${R_f}=1$ BPCU, respectively. The fixed gain for AF relaying is assumed to be ${\rm{\kappa }} = 0.9$.% Based on the different parameter settings, Nakagami-$m$ fading channel can be reduce to multiple types of channel. For instance, the Gaussian channel $(\mu=\frac{1}{2})$ and Rayleigh fading channel $(\mu=1)$ are the special cases of its.

Fig. \ref{Fig. 2} plots the outage probability of two users versus SNR with $\mu =1$. The exact outage probability curves of two users for NOMA over Nakagami-$m$ fading channels are given by numerical simulation and perfectly match with the theoretical results derived in \eqref{the last expression OP for df} and \eqref{the last expression OP for dn}, respectively. The asymptotic outage probability curves of two users are plotted according to \eqref{asymptotic exoression for the f-th user} and \eqref{asymptotic exoression for the n-th user}, respectively. Obviously, the asymptotic curves well approximate the exact curves in the high SNR. We can observe that NOMA is capable of outperforming OMA in terms of outage probability. Additionally, Fig. \ref{OP_for_first_scenario_2_3} plots the theoretical results of outage probability versus SNR with $\mu=2$ and $\mu=3$. It is observed that the considered cooperative NOMA system has lower outage probability with the parameter $\mu$ increasing. This phenomenon can be explained is that the high SNR slope for outage probability is becoming more larger.

%Additionally, the theoretical results of outage probability versus SNR with $\mu=2$ and $\mu=3$ are presented in Fig.\ref{OP_for_first_scenario_2_3}. It is observed that the considered cooperative NOMA system has lower outage probability with the parameter $\mu$ increasing.

Fig. \ref{Throughput_relay_scenario} plots the system throughput versus SNR in delay-limited transmission mode for the first scenario.
The solid curves represent throughput with different values of $\mu$ which is obtained from \eqref{Throughput Analysis for the first scenario}. The dashed curves represent throughput of conventional OMA. As can be observed from the figure, the higher system throughput can be achieved with increasing the values of $\mu$ at the high SNR. This phenomenon can be explained as that this scenario has the lower outage probability on the condition of the larger values of $\mu$. It is worth noting that NOMA achieve larger system throughput compared to conventional OMA.
%the impact on the system's outage performance is greater.  increasing the values of $\mu$,
\begin{figure}[t!]
\centering
\includegraphics[width=3.4in,  height=2.7in]{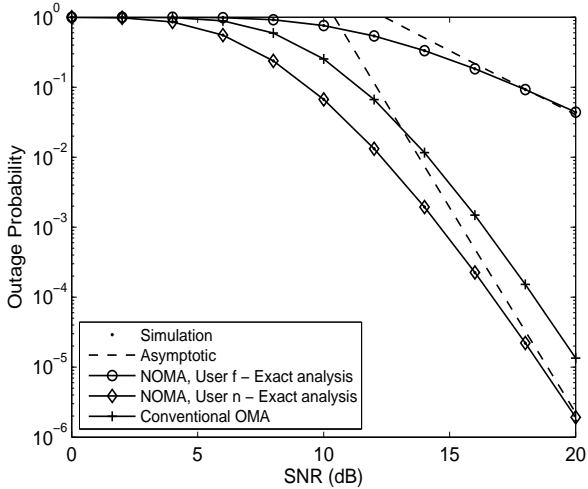}
\caption{ Outage probability for the first scenario versus SNR with $f=1, n=5$ and $\mu$=1.}
\label{Fig. 2}
\end{figure}
\begin{figure}[t!]
\centering
\includegraphics[width=3.3in,  height=2.6in]{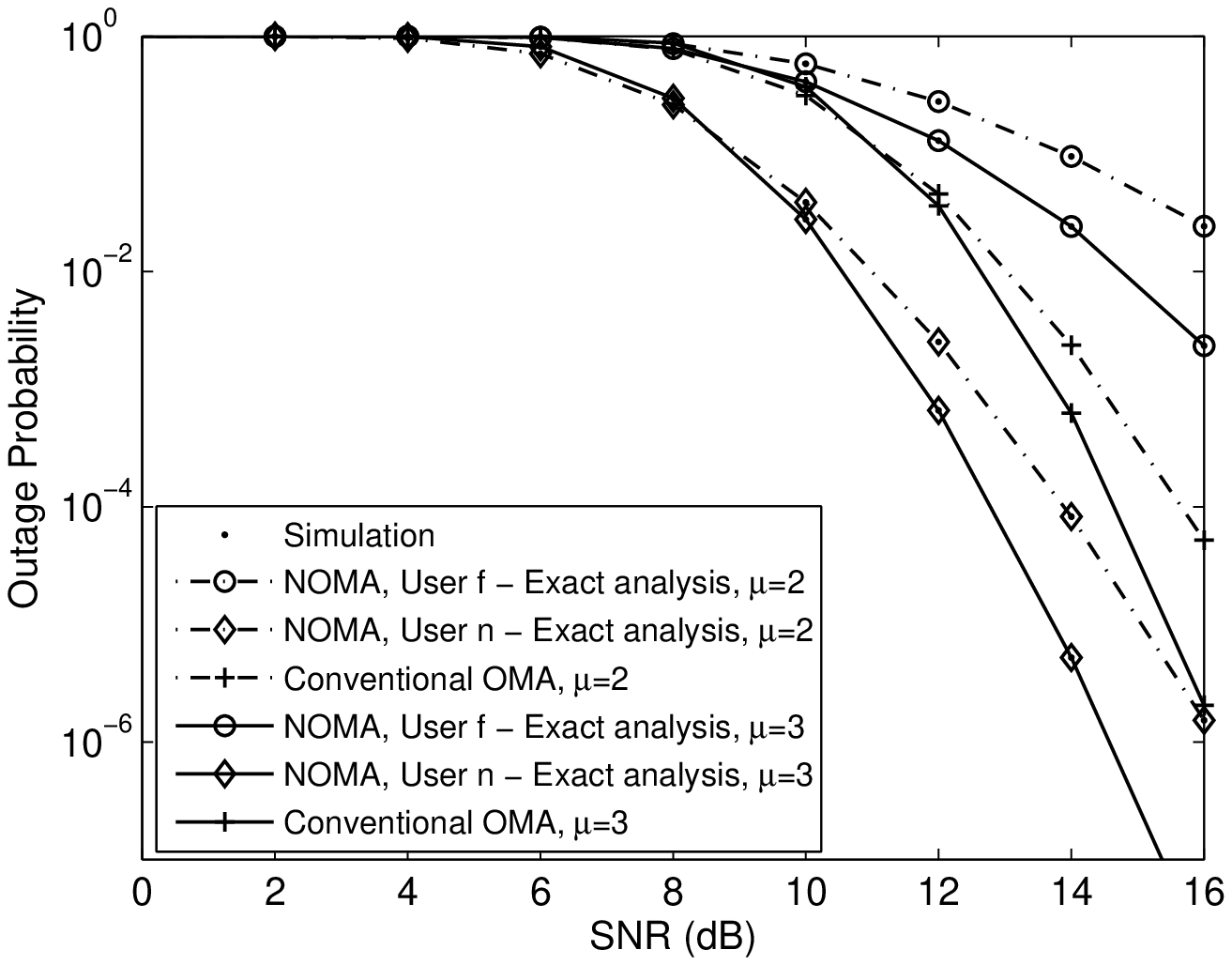}%[width=3.4in,  height=2.7in]
\caption{ Outage probability for the first scenario versus SNR with $f=1, n=5$, $\mu=2$ and $\mu=3$.}
\label{OP_for_first_scenario_2_3}
\end{figure}
\begin{figure}[t!]
\centering
\includegraphics[width=3.3in,  height=2.6in]{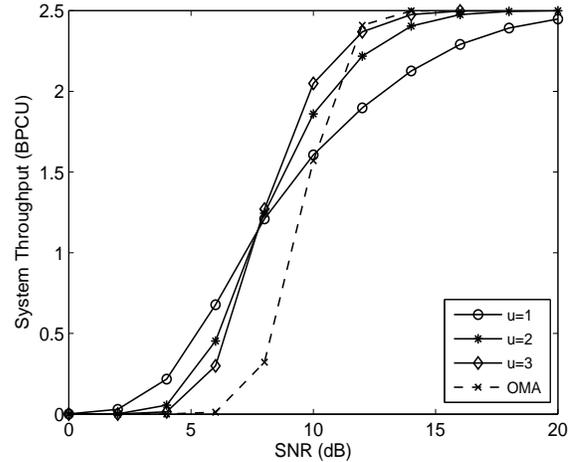}
\caption{System throughput in delay-limited transmission mode versus SNR with $\mu$=1, 2 and 3 for the first scenario.}
\label{Throughput_relay_scenario}
\end{figure}

\subsection{The second scenario}
In the second scenario, we assume that there are three users considered setting to be $M=3$. The average powers between the BS and three users are ${\omega _{1}= 0.3}$, ${\omega _{2}= 1.5}$ and ${\omega _{3}= 5}$, respectively.
The power allocation coefficients are ${a_1} = 0.5$, ${a_2} = 0.4$ and ${a_3} = 0.1$. The target rate for each user is assumed to be ${R_1} = 0.2$, ${R_2} = 1$, ${R_3} = 2$ BPCU, respectively. Similarly, the fixed gain for the AF relaying is also assumed to be ${\rm{\kappa }} = 0.9$.

Fig. \ref{Fig. 4} plots the outage probability of three users versus SNR with $\mu=1$. The solid curves represent the outage probability of three users for NOMA which are obtained from \eqref{op derived for second}. Obviously, the exact outage probability curves match precisely with the Monte Carlo simulation results. The dashed curves represents the asymptotic outage probability which is obtained from \eqref{asymptotic exoression of the m-th user for no relay}. The asymptotic curves well approximate the exact performance curves in the high SNR. It is shown that NOMA is also capable of outperforming orthogonal multiple access (OMA) in terms of outage probability in this scenario. Another observation is that when several users' QoS are met at same time, NOMA scheme offers better fairness with regard to conventional OMA. It is worth pointing out that NOMA and OMA has the same outage probability slope for user 3, which means that they achieves the same diversity. However, the different diversity orders are obtained for user 1 and 2, respectively. Fig. \ref{OP_for_seconde_scenario_2_3} plots the theoretical results of outage probability versus SNR with $\mu=2$ and $\mu=3$. It is worth noting that NOMA system can achieve lower outage performance with the parameter $\mu$ increasing.
%\textcolor[rgb]{0.00,0.00,1.00}{}
The reason is that a larger $\mu$ results in higher diversity order for each user, which in turn leads lower outage probability.
\begin{figure}[t!]
\centering
\includegraphics[width=3.3in,  height=2.6in]{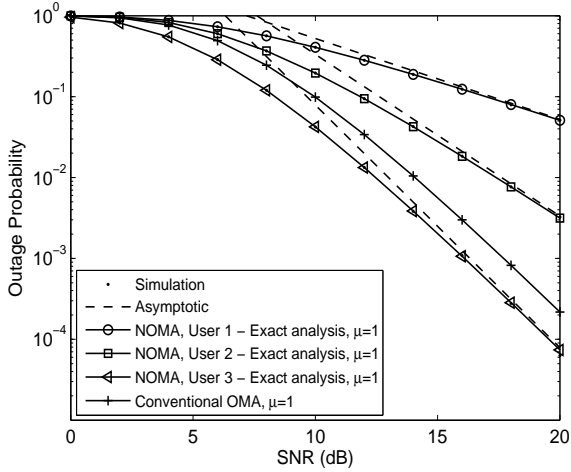}
\caption{ Outage probability for the second scenario versus SNR with $\mu$=1.}
\label{Fig. 4}
\end{figure}
\begin{figure}[t!]
\centering
\includegraphics[width=3.3in,  height=2.6in]{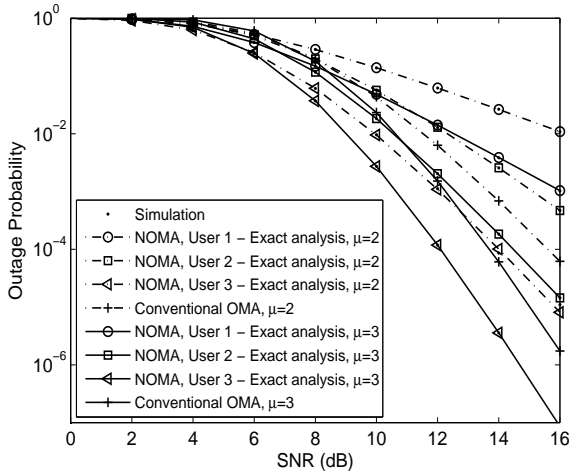}
\caption{ Outage probability for the second scenario versus SNR with $\mu=2$ and 3.}
\label{OP_for_seconde_scenario_2_3}
\end{figure}

Fig. \ref{Throughput_for_seconde_scenario_2_3} plots the system throughput versus SNR in delay-limited transmission mode for the second scenario. The solid curves represent throughput which is obtained from \eqref{Throughput Analysis for the second scenario} with different values of $\mu$. Similarly, one can observe that the higher system throughput can be achieved with the values $\mu$ increasing at the high SNR. As can be seen from the figure, the throughout ceiling exits in the high SNR region. This is due to the fact that the outage probability is tending to zero and throughput is determined only by the target rate.

%\textcolor[rgb]{0.00,0.00,1.00}{ }
From the above analysis results, we observe that the second scenario can be regarded as a benchmark of cooperative NOMA
scenario considered in this paper. For the purposes of comparison, two pairing users (user 1 and user 3) are selected to
perform NOMA jointly. The power allocation coefficients for user 1 and user 3 are ${a_1}=0.8$ and ${a_3}=0.2$.
The target rate for user 1 and user 3 are set to be ${R_1} = 0.5$, ${R_3} = 1$ BPCU, respectively.
Fig. 8 plots the outage probability of two scenarios versus SNR with $\mu=1$. One can observe that the outage performance
of cooperative NOMA scenario is superior to the second scenario. This is due to the fact that cooperative NOMA system can
provide larger diversity order relative to the second scenario.

\begin{figure}[t!]
\centering
\includegraphics[width=3.4in,  height=2.7in]{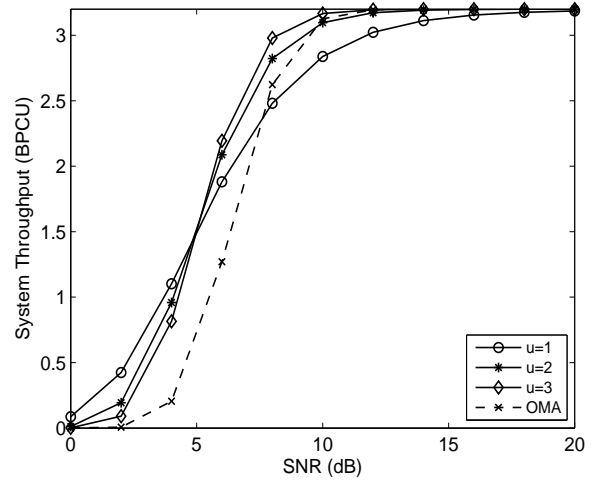}
\caption{ System throughput in delay-limited transmission mode versus SNR with $\mu$=1, 2 and 3 for the second scenario.}
\label{Throughput_for_seconde_scenario_2_3}
\end{figure}
\begin{figure}[t!]
\centering
\includegraphics[width=3.3in,  height=2.6in]{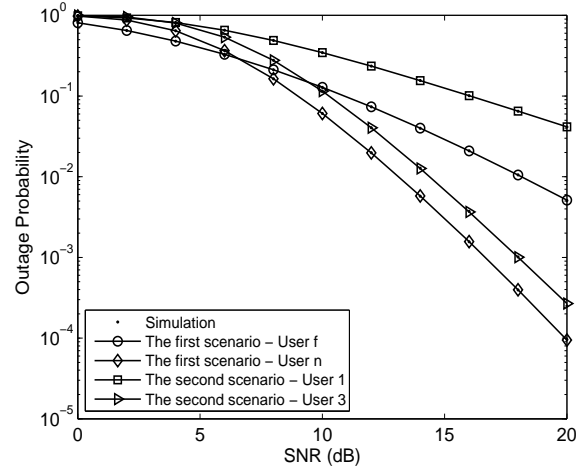}
 \caption{Outage probability for the two scenarios versus SNR with $f=1$, $n=3$ and $\mu$=1.}
\label{scenario_1_vs_2}
\end{figure}

\section{Conclusion}
In this paper the outage performance of NOMA with the fixed gain AF relaying over Nakagami-$m$ fading channels has been investigated. First, the outage behavior of the ordered users by using the AF relaying protocol was researched in detail when the direct links between the BS and the users exist. Second, new closed-form expression for the outage probability with stochastically deployed users was provided under the condition of no relaying node. Based on the analytical results, the diversity orders achieved by the users for the two scenarios have been obtained.
 Furthermore, it is observed that the fairness of multiple users can be ensured by using NOMA scheme in contrast to conventional MA.
 Additionally, these derived results clarified the outage performance of NOMA scheme with cooperative technology over more general fading channels. Finally, the performance of these two scenarios were compared
 in terms of outage probability. Assuming the direct links were existent
in the first scenario, hence our future research may consider comparing the performance between having
direct links and no direct links.
 %\textcolor[rgb]{0.00,0.00,1.00}{It is assumed that these two scenarios are evaluated in terms of outage probability respectively, but our future research may consider comparing the performance of these scenarios.}\textcolor[rgb]{0.00,0.00,1.00}{}

%\begin{center}
%Appendix A Proof of Theorem 1
%\end{center}
\appendices
\section*{Appendix~A: Proof of Theorem \ref{theorem:1}} \label{Appendix:A}
\renewcommand{\theequation}{A.\arabic{equation}}
\setcounter{equation}{0}

Substituting \eqref{far SINR} and \eqref{SINR at df second} into \eqref{expression OP for df}, the outage probability of $d_{f}$ is expressed as follows:\\
\begin{align} \label{OP derived for df}
 {P_{{d_f}}} =& \underbrace {{{\rm{P}}_{\rm{r}}}\left( {\frac{{{{\left| {{h_{s{d_f}}}} \right|}^2}{a _f}\rho }}{{{{\left| {{h_{s{d_f}}}} \right|}^2}{a _n}\rho  + 1}} < {\gamma _{t{h_f}}}} \right)}_{{\Theta _1}} \nonumber \\
  &\times \underbrace {{{\rm{P}}_{\rm{r}}}\left( {\frac{{{{\left| {{h_{sr}}} \right|}^2}{{\left| {{h_{r{d_f}}}} \right|}^2}{a_f}\rho }}{{{{\left| {{h_{sr}}} \right|}^2}{{\left| {{h_{r{d_f}}}} \right|}^2}{a_n}\rho  + {{\left| {{h_{r{d_f}}}} \right|}^2} + C}} < {\gamma _{t{h_f}}}} \right)}_{{\Theta _2}}.
\end{align}

${{{\rm{\Theta }}_1}}$ and ${{{\rm{\Theta }}_1}}$ are calculated as follows:
\begin{align}\label{Theta1 calculation OP for df}
 {{\rm{\Theta }}_1} =& {{\rm{P}}_{\rm{r}}}\left( {{{\left| {{h_{s{d_f}}}} \right|}^2} < \frac{{{\gamma _{t{h_f}}}}}{{\rho \left( {{a _f} - {a _n}{\gamma _{t{h_f}}}} \right)}} \buildrel \Delta \over = \varepsilon } \right) \nonumber\\
  =& \frac{{M!}}{{\left( {f - 1} \right)!\left( {M - f} \right)!}}\mathop \sum \limits_{i = 0}^{M - f} {M-f \choose
  i  }\frac{{{{\left( { - 1} \right)}^i}}}{{f + i}} \nonumber\\
  &\times \mathop \sum \limits_{q = 0}^{f + i} {f+i \choose
  q  }{\left( { - 1} \right)^q}{e^{ - \frac{{\mu \varepsilon q}}{{{\omega _{s{d_f}}}}}}}\nonumber\\
  &\times \mathop \sum \limits_{{p_0} +  \cdots  + {p_{\mu  - 1}} = q} {q \choose
  {{p_0}, \cdots ,{p_{\mu  - 1}}}  }\mathop \prod \limits_{k = 0}^{\mu  - 1} {\left( {\frac{{\psi _f^k}}{{k!}}} \right)^{{p_k}}}, \nonumber\\
\end{align}
where ${{\rm{\Theta }}_1}$ is established on the condition of $\frac{{{a _f}}}{{{a _n}}} > {\gamma _{t{h_f}}}$.
\begin{align}\label{Theta2 calculation OP for df}
 {{\rm{\Theta }}_2} %=& {{\rm{P}}_{\rm{r}}}\left( {\frac{{{{\left| {{h_{sr}}} \right|}^2}{{\left| {{h_{r{d_f}}}} \right|}^2}{\alpha _f}\rho }}{{{{\left| {{h_{sr}}} \right|}^2}{{\left| {{h_{r{d_f}}}} \right|}^2}{\alpha _n}\rho  + {{\left| {{h_{r{d_f}}}} \right|}^2} + C}} < {\gamma _{t{h_f}}}} \right) \nonumber\\
  = &{{\rm{P}}_{\rm{r}}}\left( {{{\left| {{h_{sr}}} \right|}^2} < \varepsilon } \right) \nonumber\\
  &+ {{\rm{P}}_{\rm{r}}}\left( {{{\left| {{h_{r{d_f}}}} \right|}^2} < \frac{{\varepsilon C}}{{\left( {{{\left| {{h_{sr}}} \right|}^2} - \varepsilon } \right)}},{{\left| {{h_{sr}}} \right|}^2} > \varepsilon } \right) \nonumber\\
=& {{\rm{P}}_{\rm{r}}}\left( {{{\left| {{h_{sr}}} \right|}^2} < \varepsilon } \right) \nonumber\\
  &+ \int_\varepsilon ^\infty  {{f_{{{\left| {{h_{sr}}} \right|}^2}}}\left( y \right)} \int_0^{\frac{{\varepsilon C}}{{\left( {y - \varepsilon } \right)}}} {{f_{{{\left| {{h_{r{d_f}}}} \right|}^2}}}\left( x \right)dxdy}  \nonumber\\
 =& 1 - \frac{{{\mu ^\mu }{e^{ - \frac{{\mu \varepsilon }}{{{\omega _{sr}}}}}}}}{{\omega _{sr}^\mu {\rm{\Gamma }}\left( \mu  \right)}}\mathop \sum \limits_{k = 0}^{\mu  - 1} \frac{{{{\left( {\varepsilon C} \right)}^k}}}{{k!}}{\left( {\frac{\mu }{{{\omega _{r{d_f}}}}}} \right)^k}\mathop \sum \limits_{i = 0}^{\mu  - 1} {\mu-1 \choose
  i } \nonumber\\
  &\times {\varepsilon ^{\mu  - i - 1}}\int_0^\infty  {{x^{i - k}}{e^{ - \frac{{\mu \varepsilon C}}{{x{\omega _{r{d_f}}}}} - \frac{{\mu x}}{{{\omega _{sr}}}}}}dx}  \\
   =& 1 - \frac{{2{\mu ^\mu }{e^{ - \frac{{\mu \varepsilon }}{{{\omega _{sr}}}}}}}}{{\omega _{sr}^\mu {\rm{\Gamma }}\left( \mu  \right)}}\mathop \sum \limits_{k = 0}^{\mu  - 1} \frac{{{{\left( {\varepsilon C} \right)}^k}}}{{k!}}{\left( {\frac{\mu }{{{\omega _{r{d_f}}}}}} \right)^k}\mathop \sum \limits_{i = 0}^{\mu  - 1} {\mu-1 \choose
  i } \nonumber\\
  &\times {\varepsilon ^{\mu  - i - 1}}{\left( {\frac{{\varepsilon C{\omega _{{\rm{s}}r}}}}{{{\omega _{r{d_f}}}}}} \right)^{\frac{{i - k + 1}}{2}}}{K_{i - k + 1}}\left( {2\mu \sqrt {\frac{{\varepsilon C}}{{{\omega _{{\rm{s}}r}}{\omega _{r{d_f}}}}}} } \right) , \nonumber\\
\end{align}  %\eqref{Theta2 calculation OP for df}
where \eqref{Theta2 calculation OP for df} follows Binomial theorem and (A.4) is obtained by using [Eq. (3.471.9)] in \cite{gradshteyn}.  Substituting
\eqref{Theta1 calculation OP for df} and (A.4) into \eqref{OP derived for df}, we can obtain
\eqref{the last expression OP for df}.

The proof is completed.
%\begin{center}
%Appendix B Proof of Theorem 2
%\end{center}

\appendices
\section*{Appendix~B: Proof of Theorem \ref{theorem:2}} \label{Appendix:A}
\renewcommand{\theequation}{B.\arabic{equation}}
\setcounter{equation}{0}

Substituting \eqref{near decod far SINR}, \eqref{near SINR} and \eqref{near user detect far user SINR at second}, \eqref{SINR at dn second} into \eqref{expression OP for dn}, the outage probability of $d_{n}$ is expressed below:
\begin{align} \label{OP derived for dn}
 {P_{{d_n}}} = \underbrace {\left[ {1 - {{\rm{P}}_{\rm{r}}}\left( {\frac{{{{\left| {{h_{s{d_n}}}} \right|}^2}{a _f}\rho }}{{{{\left| {{h_{s{d_n}}}} \right|}^2}{a _n}\rho  + 1}} \ge {\gamma _{t{h_f}}},{{\left| {{h_{s{d_n}}}} \right|}^2}{a _n}\rho  \ge {\gamma _{t{h_n}}}} \right)} \right]}_{{\Theta _3}} \nonumber \\
  \times \left[ {1 - {{\rm{P}}_{\rm{r}}}\left( {\frac{{{{\left| {{h_{sr}}} \right|}^2}{{\left| {{h_{r{d_n}}}} \right|}^2}{a _f}\rho }}{{{{\left| {{h_{sr}}} \right|}^2}{{\left| {{h_{r{d_n}}}} \right|}^2}{a _n}\rho  + {{\left| {{h_{r{d_n}}}} \right|}^2} + C}} \ge {\gamma _{t{h_f}}},} \right.} \right.  \nonumber \\
 \underbrace {\begin{array}{*{20}{c}}
   {} & {} & {} & {} & {} & {} & {}  \\
\end{array}\left. {\left. {\frac{{{{\left| {{h_{sr}}} \right|}^2}{{\left| {{h_{r{d_n}}}} \right|}^2}{a _n}\rho }}{{{{\left| {{h_{r{d_n}}}} \right|}^2} + C}} \ge {\gamma _{t{h_n}}}} \right)} \right]}_{{\Theta _4}}
\end{align}

${{\rm{\Theta }}_3}$ and ${{\rm{\Theta }}_4}$ are calculated as follows:
\begin{align}\label{Theta3 calculation OP for dn}
 {{\rm{\Theta }}_3}% =& 1 - {{\rm{P}}_{\rm{r}}}\left( {\frac{{{{\left| {{h_{s{d_n}}}} \right|}^2}{\alpha _f}\rho }}{{{{\left| {{h_{s{d_n}}}} \right|}^2}{\alpha _n}\rho  + 1}} \ge {\gamma _{t{h_f}}}} \right) \nonumber\\
  %&\times {{\rm{P}}_{\rm{r}}}\left( {{{\left| {{h_{s{d_n}}}} \right|}^2}{\alpha _n}\rho  \ge {\gamma _{t{h_n}}}} \right) \nonumber\\
  =& 1 - {{\rm{P}}_{\rm{r}}}\left( {{{\left| {{h_{s{d_n}}}} \right|}^2} \ge \varepsilon } \right){{\rm{P}}_{\rm{r}}}\left( {{{\left| {{h_{s{d_n}}}} \right|}^2} \ge \frac{{{\gamma _{t{h_n}}}}}{{{a _n}\rho }} \buildrel \Delta \over = \beta } \right) \nonumber\\
  =& 1 - {{\rm{P}}_{\rm{r}}}\left( {{{\left| {{h_{s{d_n}}}} \right|}^2} \ge \max \left( {\varepsilon ,\beta } \right) \buildrel \Delta \over = {\rm{\Omega }}} \right) \nonumber\\
  =& \frac{{M!}}{{\left( {n - 1} \right)!\left( {M - n} \right)!}}\mathop \sum \limits_{i = 0}^{M - n}{M-n \choose
  i }\frac{{{{\left( { - 1} \right)}^i}}}{{n + i}} \nonumber\\
  &\times \mathop \sum \limits_{q = 0}^{n + i} {n+i \choose
  q }{\left( { - 1} \right)^q}{e^{ - \frac{{\mu {\rm{\Omega }}q}}{{{\omega _{s{d_n}}}}}}} \nonumber\\
  &\times \mathop \sum \limits_{{p_0} +  \cdots  + {p_{\mu  - 1}} = q} {q \choose
  {{p_0}, \cdots ,{p_{\mu  - 1}}}  }\mathop \prod \limits_{k = 0}^{\mu  - 1}  {\left( {\frac{{\psi _n^k}}{{k!}}} \right)^{{p_k}}},  \nonumber\\
\end{align}
where ${\psi _n} = \frac{{\mu {\rm{\Omega }}}}{{{\omega _{s{d_n}}}}}$. ${\omega _{s{d_n}}}$ denotes the average power of the link between the BS and $d_n$.

\begin{align} \label{Theta4 calculation OP for dn}
 {{\rm{\Theta }}_4} %=& 1 - {{\rm{P}}_{\rm{r}}}\left( {\frac{{{{\left| {{h_{sr}}} \right|}^2}{{\left| {{h_{r{d_n}}}} \right|}^2}{\alpha _f}\rho }}{{{{\left| {{h_{sr}}} \right|}^2}{{\left| {{h_{r{d_n}}}} \right|}^2}{\alpha _n}\rho  + {{\left| {{h_{r{d_n}}}} \right|}^2} + C}} \ge {\gamma _{t{h_f}}}} \right) \nonumber\\
 %& \times {{\rm{P}}_{\rm{r}}}\left( {\frac{{{{\left| {{h_{sr}}} \right|}^2}{{\left| {{h_{r{d_n}}}} \right|}^2}{\alpha _n}\rho }}{{{{\left| {{h_{r{d_n}}}} \right|}^2} + C}} \ge {\gamma _{t{h_n}}}} \right) \nonumber\\
  =& 1 - {{\rm{P}}_{\rm{r}}}\left( {{{\left| {{h_{r{d_n}}}} \right|}^2} \ge \frac{{\varepsilon C}}{{{{\left| {{h_{sr}}} \right|}^2} - \varepsilon }},{{\left| {{h_{sr}}} \right|}^2} \ge \varepsilon } \right) \nonumber\\
  &\times {{\rm{P}}_{\rm{r}}}\left( {{{\left| {{h_{r{d_n}}}} \right|}^2} \ge \frac{{\beta C}}{{{{\left| {{h_{sr}}} \right|}^2} - \beta }},{{\left| {{h_{sr}}} \right|}^2} \ge \beta } \right)  \nonumber\\
 =& 1 - {{\rm{P}}_{\rm{r}}}\left( {{{\left| {{h_{r{d_n}}}} \right|}^2} \ge \frac{{{\rm{\Omega }}C}}{{{{\left| {{h_{sr}}} \right|}^2} - {\rm{\Omega }}}},{{\left| {{h_{sr}}} \right|}^2} \ge {\rm{\Omega }}} \right) \nonumber\\
  =& 1 - \int_{\rm{\Omega }}^\infty  {{f_{{{\left| {{h_{sr}}} \right|}^2}}}\left( y \right)} \int_{\frac{{{\rm{\Omega }}C}}{{y - {\rm{\Omega }}}}}^\infty  {{f_{{{\left| {{h_{r{d_n}}}} \right|}^2}}}\left( x \right)dxdy}  \nonumber\\
  =& 1 - \frac{{{\mu ^\mu }{e^{ - \frac{{\mu {\rm{\Omega }}}}{{{\omega _{sr}}}}}}}}{{\omega _{sr}^\mu {\rm{\Gamma }}\left( \mu  \right)}}\mathop \sum \limits_{k = 0}^{\mu  - 1} \frac{{{{\left( {{\rm{\Omega }}C} \right)}^k}}}{{k!}}{\left( {\frac{\mu }{{{\omega _{r{d_n}}}}}} \right)^k}\mathop \sum \limits_{i = 0}^{\mu  - 1} {\mu-1 \choose
  i } \nonumber\\
  &\times {{\rm{\Omega }}^{\mu  - i - 1}}\int_0^\infty  {{x^{i - k}}{e^{ - \frac{{\mu {\rm{\Omega }}C}}{{x{\omega _{r{d_n}}}}} - \frac{{\mu x}}{{{\omega _{sr}}}}}}dx}  \nonumber\\
  =& 1 - \frac{{2{\mu ^\mu }{e^{ - \frac{{\mu {\rm{\Omega }}}}{{{\omega _{sr}}}}}}}}{{\omega _{sr}^\mu {\rm{\Gamma }}\left( \mu  \right)}}\mathop \sum \limits_{k = 0}^{\mu  - 1} \frac{{{{\left( {{\rm{\Omega }}C} \right)}^k}}}{{k!}}{\left( {\frac{\mu }{{{\omega _{r{d_n}}}}}} \right)^k}\mathop \sum \limits_{i = 0}^{\mu  - 1} {\mu-1 \choose
  i } \nonumber\\
  &\times {{\rm{\Omega }}^{\mu  - i - 1}}{\left( {\frac{{{\rm{\Omega }}C{\omega _{{\rm{sr}}}}}}{{{\omega _{r{d_n}}}}}} \right)^{\frac{{i - k + 1}}{2}}}{K_{i - k + 1}}\left( {2\mu \sqrt {\frac{{{\rm{\Omega }}C}}{{{\omega _{sr}}{\omega _{r{d_n}}}}}} } \right).
\end{align}
Substituting \eqref{Theta3 calculation OP for dn} and \eqref{Theta4 calculation OP for dn} into \eqref{OP derived for dn}, we can obtain \eqref{the last expression OP for dn}.

The proof is completed.

\bibliographystyle{IEEEtran}
\bibliography{mybib}

\end{document}